\newcommand{\p}{\partial}
\newcommand{\dd}{{\rm d}}
\theoremstyle{theorem}
\newtheorem{theorem}{Theorem}
\theoremstyle{theorem}
\newtheorem{proposition}{Proposition}
\theoremstyle{theorem}
\newtheorem{corollary}{Corollary}
\theoremstyle{lemma}
\newtheorem{lemma}{Lemma}
\newtheorem{remark}[section]{Remark}
\theoremstyle{definition}
\theoremstyle{example}
\newtheorem{example}{Example}
\begin{document}

\title{A divergence theorem for pseudo-Finsler spaces}

\author{E. Minguzzi\thanks{
Dipartimento di Matematica e Informatica ``U. Dini'', Universit\`a
degli Studi di Firenze, Via S. Marta 3,  I-50139 Firenze, Italy.
 e-mail: ettore.minguzzi@unifi.it }}

%\date{}

\maketitle

%\date{}

\begin{abstract}
\noindent We study the divergence theorem on pseudo-Finsler spaces and obtain a completely Finslerian version for spaces having a vanishing mean Cartan torsion. This result  helps to clarify the problem of energy-momentum conservation in Finsler gravity theories.
\end{abstract}

%{Keywords: Finsler gravity, Finslerian relativity, conservation laws.}

\section{Introduction}
Two Finslerian divergence theorems  have been discussed  by H.\ Rund \cite{rund75} and Z.\ Shen \cite{shen98,shen01}. While they both equate a certain volume integral on a domain of $M$ with a surface integral at the boundary, they have quite a different nature.

Let $\pi\colon TM\to M$ be the tangent bundle, let $E=TM\backslash 0$, be the slit tangent bundle, and let $VE$ be the kernel of  $\pi_*$, namely the vertical tangent bundle. Let $\mathscr{L}\colon TM\backslash 0 \to \mathbb{R}$ be the Finsler Lagrangian (for detailed definitions we refer the reader to Sec.\ \ref{kqp}), and let $g$ be its vertical Hessian, namely the Finsler metric. Rund's theorem involves a vector field $Z\colon E\to TM$, a section $s\colon M\to E$, and the Finslerian divergence of $Z$ on $E$ calculated with the horizontal covariant derivative and pulled back to $M$ using $s$ (cf.\ Eq.\ (\ref{oap})). It has the drawback that the boundary term is not genuinely Finslerian, indeed, both the normal to the hypersurface and the boundary form are deduced from the pullback metric $s^*g$. So there appear elements of Riemannian geometry.

The version by Shen is somewhat complementary \cite[Theor.\ 2.4.2]{shen01}. It is less Finslerian for what concerns the vector field since it deals with  a field $X\colon M\to TM$ on the base, for the computation of whose divergence the  Finsler connection is not required. However,
%ndependent of the fiber variables,
the boundary term is genuinely Finslerian as it involves the notion of Finsler normal to a hypersurface.

In this work we are going to elaborate a further version of  the divergence theorem in Finsler geometry. We shall first give a short derivation of Rund's result and then we shall show that for pseudo-Finsler spaces with vanishing mean Cartan torsion, $I_\alpha=0$, it is possible to give a genuinely Finslerian divergence theorem in which both the vector field and the boundary terms are Finslerian.

It must be recalled that by Deicke's theorem \cite{deicke53,brickell65,bao00}  a Finsler space with vanishing mean Cartan torsion is actually Riemannian. As a consequence, the mentioned result will be of interest only for  pseudo-Finsler spaces of non-definite signature. It has been suggested by the author that Lorentz-Finsler spaces with zero mean Cartan torsion might be the appropriate objects of study in Finsler gravity theory \cite{minguzzi14c,minguzzi15c,minguzzi16c}, particularly because they have affine sphere indicatrices and because, as in general relativity, they are uniquely determined by a spacetime volume form and a light cone distribution  \cite{minguzzi15e}. Therefore, it is  expected that the results of this work could shed light on the problem of energy-momentum conservation in Finslerian extensions of general relativity.

%\begin{example}
%A simple example of
%\end{example}

\section{Connections in pseudo-Finsler geometry} \label{kqp}

In this section we assume some familiarity with the notion of Finsler connection and of pullback connection \cite{matsumoto86,antonelli93,ingarden93,anastasiei96,bejancu00,szilasi14,minguzzi14c}. Let us give some key coordinate expressions  in order to fix terminology and notations. Let $\{x^\mu\}$ be coordinates on a chart of $M$ and let $\{x^\mu, y^\mu\}$ be the induced coordinates on $TM$. The Finsler Lagrangian is, by definition, positive homogeneous of degree two $\mathscr{L}(x,sy)=s^2 \mathscr{L}(x,y)$, $\forall s>0$. Although we assumed that $\mathscr{L}$ is defined on the slit tangent bundle, this assumption can be relaxed, e.g.\ it could be defined on just a convex cone subbundle provided the next equations are evaluated on its domain, see \cite{minguzzi13c,minguzzi14h} for a complete discussion. The Finsler metric is given by the Hessian $g_{\mu \nu}=\frac{\p^2\mathscr{L}}{\p y^\mu \p y^\nu}$ and is assumed non-degenerate. The Cartan torsion is $C_{\alpha \beta \gamma}=\frac{1}{2}\frac{\p}{\p y^\alpha} g_{\beta \gamma}$ while the mean Cartan torsion is $I_\gamma=g^{\alpha \beta} C_{\alpha \beta \gamma}$.

\begin{example}
A non-trivial example of affine sphere spacetime is
\begin{align}
%\begin{split}
\mathscr{L}&=-\tfrac{2}{3^{3/4}}\left(\left(\tfrac{1}{2}  \dd t+\tfrac{\sqrt{3}}{2} a(t) \dd z\right)^{2}\right)^{1/4} \left(\!\left(\tfrac{\sqrt{3}}{2}  \dd t-\tfrac{1}{2} a(t) \dd z\right)^{\!\!2}- a^2(t) (\dd x^2+\dd y^2)\right)^{\!3/4} \nonumber
%\end{split}
\end{align}
which in the low velocity limit (with respect to an observer whose velocity field is $u\propto\p_t$) gives the general relativistic Friedmann metric in the flat space section case (i.e.\ $k=0$) $g=-\dd t^2+ a(t)^2 (\dd x^2+\dd y^2+\dd z^2)$. Here $a(t)$ is the scale factor of the Universe. Many other examples of affine sphere spacetimes and a discussion can be found in \cite{minguzzi16c}. There are really plenty of affine sphere spacetimes since there are plenty of cone structures and volume forms. If the light cones have ellipsoidal sections one recovers the usual Lorentzian spacetimes \cite{minguzzi15e}. We remark that in what follows we do not assume a Lorentzian signature for the Finsler metric, though this is certainly the most interesting case.
\end{example}

 A non-linear connection is a splitting of the tangent space $TE=VE\oplus HE$ into vertical and horizontal bundles. A basis for the horizontal space is given by
\[
\Big\{\frac{\delta \ }{\delta x^\mu}\Big\}, \qquad \frac{\delta}{\delta x^\mu}=\frac{\p}{\p x^\mu}-N^\nu_\mu(x,y) \frac{\p}{\p y^\nu},
\]
where the coefficients $N^\nu_\mu(x,y)$ define the non-linear connection and have suitable transformation properties under change of coordinates.
The covariant derivative for the non-linear connection is defined as follows. Given a section $s\colon U\to E$, $U\subset M$,
\[
 D_{\xi} s^\alpha=\Big(\frac{\p s^\alpha}{\p x^\mu} +N^\alpha_\mu\big(x, s(x)\big) \Big)\xi^\mu .
\]
Usually one considers the non-linear connection determined by a spray as follows \[N^\mu_\alpha=G^\mu_\alpha:=\p G^\mu/\p y^\alpha ,\]
where
\[
2 {G}^\alpha(x,y)= g^{\alpha\delta}\Big( \frac{\p^2\mathscr{L} }{\p x^\gamma \p y^\delta} \,y^\gamma -\frac{\p\mathscr{L} }{\p x^\delta } \Big) .
\]
This will be our choice also.

The Finsler connections are splittings of the vertical bundle $\pi_E\colon VE\to E$.  In coordinates they are triplets of coefficients $(N^\alpha_\beta, H^\alpha_{\beta \gamma}, V^\alpha_{\beta \gamma})$ with suitable transformations properties. We shall assume that they are regular, namely $H^\alpha_{\beta \gamma} y^\beta=N^\alpha_\gamma$, and symmetric in the lower indices (torsionless).
The  Berwald, Cartan, Chern-Rund and Hashiguchi connections are of this type \cite{matsumoto86,antonelli93,anastasiei96,bejancu00,szilasi14,minguzzi14c}. .

Each Finsler connection $\nabla$ determines two covariant derivatives $\nabla^H$ and $\nabla^V$ respectively being obtained from $\nabla_{\check X}$ whenever $\check X$ is the horizontal or the vertical lift of a vector $X\in TM$. The horizontal covariant dervative $\nabla^H$ is determined by local connection coefficients $H^\alpha_{\mu \nu}(x,y)$, in particular over a vector field $Z\colon E\to VE\simeq TM$ it acts as follows
\[
(\nabla^H_\alpha Z)^\beta=\frac{\delta Z^\beta}{\delta x^\alpha}+H^\beta_{\mu \alpha} Z^\mu.
\]
The Berwald connection is determined by $V^\alpha_{\beta \gamma}=0$ and
\[
H^\alpha_{\mu \nu}:=G^\alpha_{\mu \nu}:=\frac{\p}{\p v^\nu}\,G^\alpha_\mu,
\]
Both the Chern-Rund and the Cartan connections are such that $\nabla^H g=0$, hence
\begin{equation} \label{sog}
H^\alpha_{\beta \gamma}:=\Gamma_{\beta \gamma}^{\alpha}:=\frac{1}{2} g^{\alpha \sigma} \Big( \frac{\delta}{\delta x^\beta} \,g_{\sigma \gamma}+\frac{\delta}{\delta x^\gamma} \, g_{\sigma \beta}-\frac{\delta}{\delta x^\sigma}\, g_{\beta \gamma}\Big),
\end{equation}
however, for the former $V^\alpha_{\beta \gamma}=0$ while for the latter $V^\alpha_{\beta \gamma}=C^\alpha_{\beta \gamma}$. We shall denote with $\nabla^{HC}$ the horizontal covariant derivative for the Cartan (Chern-Rund) connection, and with $\nabla^{HB}$ the horizontal covariant derivative for the Berwald connection.  The difference $G_{\beta \gamma}^{\alpha}-\Gamma_{\beta \gamma}^{\alpha}=L_{\beta \gamma}^{\alpha}$ is the Landsberg tensor \cite{shen01,minguzzi14c}.

\section{The divergence theorem}

Let $s\colon M \to TM\backslash 0$ be a section, and let us consider the pullback
metric $s^*g$. Its components are $(s^*g)_{\alpha \beta}(x)=g_{\alpha
\beta}(x,s(x))$. Let $\nabla^{s^*g}$ be the Levi-Civita connection of $s^*g$, and let
$\overset{s}{\nabla}$ be the pullback of a Finsler connection. Its coefficients are
\cite[Sec.\ 4.1.1]{minguzzi14c} \cite[Eq.\ (3.7)]{ingarden93}
\[
(\overset{s}{\nabla})^\gamma_{\alpha \beta}=H^\gamma_{\alpha \beta}(x,s(x))+V^\gamma_{\beta \mu}(x,s(x))D_\alpha s^\mu,
\]
where $H^\gamma_{\alpha \beta}$ and $V^\gamma_{\alpha \beta}$ are the coefficients of
the horizontal and vertical covariant derivatives of the Finsler connection.
Particularly simple are the coefficients of the pullback
$\overset{s}{\nabla}{}^{\mathrm{ChR}}$ of the Chern-Rund connection, namely
$\Gamma^\gamma_{\alpha \beta}(x,s(x))$. Since the connection coefficients on $M$
transform with the usual law, their difference is a tensor which we wish to
calculate.

\begin{proposition}
For every $X,Y\colon M \to TM$
\begin{equation} \label{dos}
\nabla^{s^* g}_X Y-\overset{s}{\nabla}{}^{\mathrm{ChR}}_X Y=C(D_Ys,X)+C(D_Xs,Y)-g(C(X,Y),D
s)^\sharp.
\end{equation}
\end{proposition}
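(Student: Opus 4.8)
The plan is to exploit that both connections on $M$ are torsion-free, so their difference is a \emph{symmetric} $(1,2)$-tensor $D(X,Y):=\nabla^{s^* g}_X Y-\overset{s}{\nabla}{}^{\mathrm{ChR}}_X Y$, and to pin it down by comparing how each connection acts on the non-degenerate pullback metric $h:=s^*g$. Since the Levi-Civita connection is by construction metric and torsion-free with respect to $h$, the whole problem reduces to computing the \emph{non-metricity} of the pulled-back Chern-Rund connection $\overset{s}{\nabla}{}^{\mathrm{ChR}}$ with respect to $h$.

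The key computation is $\p_\mu h_{\alpha\beta}$ for $h_{\alpha\beta}(x)=g_{\alpha\beta}(x,s(x))$. By the chain rule one gets an $x$-derivative and a $y$-derivative of $g$ evaluated along $s$. Rewriting $\p/\p x^\mu=\delta/\delta x^\mu+N^\nu_\mu\,\p/\p y^\nu$, using $\p g_{\alpha\beta}/\p y^\nu=2C_{\nu\alpha\beta}$, and recognizing that $\p_\mu s^\nu+N^\nu_\mu(x,s)=D_\mu s^\nu$, the two vertical contributions collapse into a single Cartan-torsion term. Invoking the Chern-Rund metric compatibility $\delta g_{\alpha\beta}/\delta x^\mu=\Gamma^\sigma_{\alpha\mu}g_{\sigma\beta}+\Gamma^\sigma_{\beta\mu}g_{\sigma\alpha}$ (which follows from $\nabla^H g=0$) evaluated along $s$, this yields the compact identity
\[
(\overset{s}{\nabla}{}^{\mathrm{ChR}}_\mu h)_{\alpha\beta}=2\,C_{\nu\alpha\beta}(x,s)\,D_\mu s^\nu=:Q_{\mu\alpha\beta}.
\]
I expect this to be the crux of the argument, and the main obstacle: the bookkeeping of converting $\p_x$ into the horizontal derivative and correctly collecting the two vertical pieces into the single combination $D_\mu s^\nu$.

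With the non-metricity in hand, I would subtract the metric-compatibility equation of $\nabla^{s^* g}$ from the identity above. Writing $D_{\lambda\mu\nu}=h_{\lambda\sigma}D^\sigma_{\mu\nu}$ (symmetric in $\mu\nu$ because both connections are torsion-free), this produces $Q_{\mu\alpha\beta}=D_{\beta\mu\alpha}+D_{\alpha\mu\beta}$. Solving by the standard cyclic-permutation trick, exactly as in the derivation of the Christoffel symbols, gives
\[
D_{\beta\alpha\mu}=\tfrac12\bigl(Q_{\mu\alpha\beta}+Q_{\alpha\beta\mu}-Q_{\beta\mu\alpha}\bigr).
\]

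It then remains to substitute $Q_{\mu\alpha\beta}=2C_{\nu\alpha\beta}D_\mu s^\nu$ and rewrite the three terms intrinsically. Using the total symmetry of the Cartan torsion $C_{\nu\alpha\beta}$ together with the definitions $C(A,B)^\gamma=g^{\gamma\sigma}C_{\sigma\rho\lambda}A^\rho B^\lambda$ and the musical isomorphism $\sharp$ of $h$, the first term reproduces $C(D_Y s,X)$, the second $C(D_X s,Y)$, and the third, once a free test vector $W$ is contracted with the last index and then removed by non-degeneracy of $h$, yields $-g(C(X,Y),Ds)^\sharp$. This establishes Eq.~(\ref{dos}). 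Apart from the non-metricity computation flagged above, the remaining steps are the familiar Koszul manipulation and routine index symmetry.
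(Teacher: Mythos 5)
Your proof is correct and takes essentially the same route as the paper: the crux in both is the chain-rule identity $\partial_\mu (s^*g)_{\alpha\beta}=\big(\delta g_{\alpha\beta}/\delta x^\mu\big)(x,s(x))+2C_{\nu\alpha\beta}D_\mu s^\nu$, combined with the Chern--Rund metric compatibility and the Koszul cyclic combination. The only difference is packaging: the paper runs this computation directly inside the explicit Christoffel formula for $\nabla^{s^*g}$ (where the cyclic combination is already built in), whereas you first isolate the non-metricity tensor $Q$ of the pulled-back connection and then solve for the difference tensor by cyclic permutation.
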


It is understood that in this formula $C$ stands for $s^* C$ and $\sharp: T^*M\to TM$ is the musical isomorphism given by the metric $s^*g$.

\begin{proof}
With obvious meaning of the notation, and lowering the upper index of the connection
to the left, we have at $(x,s(x))$
\begin{align*}
2 g_{\alpha \delta} &\big[(\nabla^{s^*g})^\alpha_{\beta \gamma}\!-\!\Gamma^\alpha_{ \beta
\gamma}(x,s(x))\big] =\frac{\p}{\p x^\beta } (s^*g)_{\delta \gamma}\!+\!\frac{\p}{\p
x^\gamma} (s^*g)_{\delta \beta}  \!-\!\frac{\p}{\p x^\delta } (s^*g)_{\beta \gamma}\!-\!2
\Gamma_{\delta \beta \gamma}\nonumber\\
&=\big(\frac{\p}{\p x^\beta } g_{\delta \gamma}+\frac{\p}{\p x^\gamma} g_{\delta
\beta}-\frac{\p}{\p x^\delta } g_{\beta \gamma}\big) (x,s(x)) -2 \Gamma_{\delta \beta
\gamma} + 2 C_{\mu \delta \gamma} \frac{\p s^\mu}{\p x^\beta}\\&\qquad+2 C_{\mu \delta \beta}
\frac{\p s^\mu}{\p x^\gamma}-2 C_{\mu \beta \gamma} \frac{\p s^\mu}{\p
x^\delta}\nonumber\\
&=\big(\frac{\delta}{\delta x^\beta } g_{\delta \gamma}+\frac{\delta}{\delta
x^\gamma} g_{\delta \beta}-\frac{\delta}{\delta x^\delta } g_{\beta \gamma}\big)
(x,s(x)) -2 \Gamma_{\delta \beta \gamma} \\&\qquad+ 2 C_{\mu \delta \gamma} D_\beta s^\mu +2
C_{\mu \delta \beta} D_\gamma s^\mu-2 C_{\mu \beta \gamma} D_\delta s^\mu
\nonumber\\
&=2 C_{\mu \delta \gamma} D_\beta s^\mu +2 C_{\mu \delta \beta} D_\gamma s^\mu-2
C_{\mu \beta \gamma} D_\delta s^\mu . \nonumber
\end{align*}
Thus
\begin{equation*}
(\nabla^{s^*g})^\alpha_{\beta \gamma}-\Gamma^\alpha_{ \beta \gamma}(x,s(x))=
C^\alpha_{\mu  \gamma} D_\beta s^\mu + C^\alpha_{\mu \beta} D_\gamma s^\mu- C_{\mu
\beta \gamma}  g^{\alpha \delta} D_\delta s^\mu
\end{equation*}
\end{proof}
Equation (\ref{dos}) clarifies that the connection of the pullback metric is neither
the pullback of the Chern-Rund connection, nor the pullback of the Cartan connection
(known as Barthel connection \cite[Theor.\ 1]{ingarden93}).

\begin{corollary}
%Concerning divergences, we have
%\begin{itemize}
%\item[(a)]  $\nabla^{s^* g} \cdot s=\overset{s}{\nabla}{}^{ChR} \cdot s$,
%\item[(b)]
For every vector field $Y\colon M\to TM$,
\[
\nabla^{s^* g} \cdot
Y=\overset{s}{\nabla}{}^{\mathrm{ChR}} \cdot Y+s^*I(D_Y s).
\]
%\end{itemize}
\end{corollary}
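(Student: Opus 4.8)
The plan is to obtain the corollary simply by tracing the identity of the preceding proposition. Recall that the divergence of a vector field $Y$ relative to a linear connection is the trace of the endomorphism $X\mapsto \nabla_X Y$, i.e.\ $\nabla\cdot Y=(\nabla_\alpha Y)^\alpha$ in coordinates. The non-tensorial $\frac{\p}{\p x^\alpha}Y^\alpha$ piece is connection independent, so the divergence depends \emph{linearly} on the connection coefficients; consequently the difference $\nabla^{s^* g}\cdot Y-\overset{s}{\nabla}{}^{\mathrm{ChR}}\cdot Y$ is exactly the $X$-trace of the right-hand side of (\ref{dos}).

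Concretely, I would work in coordinates at $(x,s(x))$ and start from the tensorial difference of connection coefficients already isolated in the proof of the proposition, namely
\[
(\nabla^{s^*g})^\alpha_{\beta \gamma}-\Gamma^\alpha_{\beta \gamma}(x,s(x))= C^\alpha_{\mu \gamma} D_\beta s^\mu + C^\alpha_{\mu \beta} D_\gamma s^\mu- C_{\mu \beta \gamma}\, g^{\alpha \delta} D_\delta s^\mu ,
\]
and then contract the upper index $\alpha$ with the first lower index $\beta$. The middle term immediately produces $C^\alpha_{\mu \alpha}=g^{\alpha\sigma}C_{\sigma\mu\alpha}=I_\mu$, by the total symmetry of the Cartan tensor and the definition $I_\mu=g^{\alpha\beta}C_{\alpha\beta\mu}$. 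A final contraction with $Y^\gamma$, using $D_\gamma s^\mu\, Y^\gamma=(D_Y s)^\mu$, turns this single contribution into $s^* I(D_Y s)$, which is the claimed correction term.

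The step that really has to be checked is that the remaining two terms cancel. After the trace the first term reads $g^{\alpha\sigma}C_{\sigma\mu\gamma}\,D_\alpha s^\mu$, while the last term, after raising with $g^{\alpha\delta}$ and relabelling the summation indices, reads $g^{\alpha\delta}C_{\delta\mu\gamma}\,D_\alpha s^\mu$; the total symmetry of $C_{\alpha\beta\gamma}$ makes these two expressions coincide, so their signed sum vanishes. Equivalently, in the intrinsic notation of (\ref{dos}), the $X$-traces of $C(D_X s,Y)$ and of $g(C(X,Y),Ds)^\sharp$ are equal, once more because of the symmetry of $C$ together with the compatibility of $\sharp$ with the pullback metric $s^*g$, so that only the trace of $C(D_Y s,X)$ survives.

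I expect no genuine obstacle here, since the entire argument is the contraction of an already-established tensor identity; the result is in fact an exact analogue of the classical computation comparing two torsionless connections through the trace of their difference tensor. The only point demanding care is the index bookkeeping in the cancellation of the last two terms, where one must invoke the \emph{total} symmetry of the Cartan tensor and not merely its symmetry in the lower pair of indices.
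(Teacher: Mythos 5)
Your proposal is correct and is essentially the paper's own (implicit) argument: the corollary follows by contracting the difference-tensor identity of the Proposition over the upper index and the differentiation index, with the middle term yielding $s^*I(D_Y s)$ and the other two terms cancelling by the total symmetry of the Cartan tensor. Your index bookkeeping, in particular the cancellation of the first and third terms, checks out exactly.
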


We need a lemma to remove the derivative of the pullback connection in favor of the
derivative of the horizontal connection.
\begin{lemma}
Let $Z\colon TM\backslash 0\to TM$ and let $s\colon M \to TM\backslash 0$, then
\begin{equation}
\overset{s}{\nabla}{}^{\mathrm{ChR}}_X s^*Z=s^*(\nabla^{HC}_X Z)+ s^*\Big(\frac{\p Z^\mu}{\p y^
\beta}\Big) D_X s^\beta \frac{\p}{\p x^\mu}
\end{equation}
\end{lemma}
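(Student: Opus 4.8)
The plan is to prove the identity by a direct coordinate computation, comparing the two sides componentwise at the point $(x,s(x))$. The decisive simplification is that the Chern-Rund connection has $V^\gamma_{\beta\mu}=0$, so by the pullback-coefficient formula recalled above its pullback coefficients reduce to $(\overset{s}{\nabla}{}^{\mathrm{ChR}})^\gamma_{\alpha\beta}=\Gamma^\gamma_{\alpha\beta}(x,s(x))$, with no vertical correction. Both sides of the claimed identity are then built out of $\Gamma$ evaluated at $(x,s(x))$, so the whole content lies in tracking how $x$-derivatives act on the composite object $Z(x,s(x))$.

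First I would expand the left-hand side. Writing $(s^*Z)^\gamma(x)=Z^\gamma(x,s(x))$, the pullback derivative is
\[
(\overset{s}{\nabla}{}^{\mathrm{ChR}}_X s^*Z)^\gamma=X^\alpha\Big(\frac{\p}{\p x^\alpha}Z^\gamma(x,s(x))+\Gamma^\gamma_{\alpha\beta}(x,s)\,Z^\beta(x,s)\Big),
\]
and the chain rule turns the first term into $\frac{\p Z^\gamma}{\p x^\alpha}(x,s)+\frac{\p Z^\gamma}{\p y^\beta}(x,s)\frac{\p s^\beta}{\p x^\alpha}$. For the right-hand side I would insert $\frac{\delta}{\delta x^\alpha}=\frac{\p}{\p x^\alpha}-N^\nu_\alpha\frac{\p}{\p y^\nu}$ into the definition of $\nabla^{HC}$, obtaining
\[
(s^*\nabla^{HC}_X Z)^\gamma=X^\alpha\Big(\frac{\p Z^\gamma}{\p x^\alpha}(x,s)-N^\nu_\alpha(x,s)\frac{\p Z^\gamma}{\p y^\nu}(x,s)+\Gamma^\gamma_{\beta\alpha}(x,s)\,Z^\beta(x,s)\Big).
\]

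Subtracting, the plain $\frac{\p Z^\gamma}{\p x^\alpha}$ terms cancel, and the $\Gamma Z$ terms cancel because the Chern-Rund coefficients are symmetric in their lower indices (the torsionless hypothesis), so $\Gamma^\gamma_{\alpha\beta}=\Gamma^\gamma_{\beta\alpha}$. What survives is $X^\alpha\,\frac{\p Z^\gamma}{\p y^\beta}(x,s)\big(\frac{\p s^\beta}{\p x^\alpha}+N^\beta_\alpha(x,s)\big)$. The last step is to recognize the bracket as $D_\alpha s^\beta$ from the definition of the non-linear covariant derivative, so that contracting with $X^\alpha$ yields $D_X s^\beta$ and the surviving term is exactly $s^*\big(\frac{\p Z^\mu}{\p y^\beta}\big)\,D_X s^\beta\,\frac{\p}{\p x^\mu}$, which is the asserted identity.

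There is no genuine obstacle here: the argument is entirely a matter of careful bookkeeping. The one point that must be handled cleanly is the distinction between the total $x$-derivative of $Z(x,s(x))$ on the base and the partial derivatives $\frac{\p Z}{\p x}$, $\frac{\p Z}{\p y}$ evaluated at $(x,s(x))$. It is precisely the chain-rule remainder $\frac{\p Z}{\p y}\frac{\p s}{\p x}$ produced on the left and the $-N\frac{\p Z}{\p y}$ term hidden inside $\frac{\delta}{\delta x}$ on the right that recombine into $D_X s$; getting these two contributions to align, and invoking the symmetry of $\Gamma$ to kill the connection terms, is the whole of the proof.
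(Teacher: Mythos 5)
Your proof is correct and follows essentially the same route as the paper's: a chain-rule expansion of $\frac{\p}{\p x^\alpha}Z^\gamma(x,s(x))$, the splitting of $\frac{\p}{\p x^\alpha}$ into $\frac{\delta}{\delta x^\alpha}$ plus the nonlinear-connection term so that the remainder recombines with $\frac{\p s^\beta}{\p x^\alpha}$ into $D_\alpha s^\beta$, and the observation that the Chern--Rund pullback coefficients are just $\Gamma^\gamma_{\alpha\beta}(x,s(x))$. The only cosmetic difference is that the paper adds $\Gamma^\mu_{\sigma\gamma}Z^\sigma$ to both sides of the chain-rule identity rather than subtracting the two expanded sides, leaving the torsion-free symmetry of $\Gamma$ implicit where you invoke it explicitly.
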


\begin{proof}
It follows from
\begin{align*}
\frac{\p}{\p x^\gamma} Z^\mu(x,s(x))&=\frac{\p Z^\mu}{\p x^\gamma}+\frac{\p
Z^\mu}{\p y^\beta } \frac{\p s^\beta}{\p x^\gamma}=\Big(\frac{\p Z^\mu}{\p
x^\gamma}-N^\beta_\gamma \frac{\p Z^\mu}{
\p y^\beta}\Big)+ \frac{\p Z^\mu}{\p y^\beta } \Big(\frac{\p s^\beta}{\p
x^\gamma}+N^\beta_\gamma\Big)\\
&=s^*\Big(\frac{\delta Z^\mu}{\delta x^\gamma}\Big)+s^*\Big(\frac{\p Z^\mu}{\p y^\beta}\Big) D_\gamma
s^\beta,
\end{align*}
adding $\Gamma^\mu_{\sigma \gamma}(x,s(x)) Z^\sigma(x,s(x))$ to both sides.
\end{proof}

As a consequence we obtain the following expression for the divergence, see also
\cite[Eq.\ 3.10]{rund75}.

\begin{theorem} \label{yqq}
Let $Z\colon TM\backslash 0\to TM$ and let $s\colon M \to TM\backslash 0$, then
\begin{equation} \label{oap}
\nabla^{s^* g} \cdot s^*Z=s^*(\nabla^{HC} \cdot Z)+s^*I(D_{s^*Z}s)+ s^*\Big(\frac{\p
Z^\mu}{\p y^
\beta}\Big) D_\mu s^\beta
\end{equation}
\end{theorem}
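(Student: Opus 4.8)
The plan is to combine the Corollary with the Lemma, the only genuinely new operation being the passage from a covariant derivative to its trace. First I would apply the Corollary to the vector field $Y=s^*Z$ on $M$. This immediately gives
\[
\nabla^{s^* g} \cdot s^*Z=\overset{s}{\nabla}{}^{\mathrm{ChR}} \cdot s^*Z+s^*I(D_{s^*Z} s),
\]
so that the left-hand side of (\ref{oap}) is already expressed through the divergence of the pullback Chern-Rund connection plus the mean-Cartan-torsion correction, the latter being exactly the second summand of the claimed identity.

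Next I would evaluate $\overset{s}{\nabla}{}^{\mathrm{ChR}} \cdot s^*Z$ by tracing the pointwise identity furnished by the Lemma. Recalling that the divergence of a vector field is the trace of its covariant derivative, I would take $X=\p/\p x^\mu$ in the Lemma and contract the differentiation index with the $\frac{\p}{\p x^\mu}$ component of the output. The first term on the right of the Lemma then contracts to $s^*(\nabla^{HC} \cdot Z)$, the horizontal divergence pulled back by $s$, while the second term contracts to $s^*\big(\frac{\p Z^\mu}{\p y^\beta}\big)D_\mu s^\beta$, where the free index $\mu$ now plays simultaneously the role of component label and of differentiation direction. This produces
\[
\overset{s}{\nabla}{}^{\mathrm{ChR}} \cdot s^*Z=s^*(\nabla^{HC} \cdot Z)+s^*\Big(\frac{\p Z^\mu}{\p y^\beta}\Big) D_\mu s^\beta,
\]
and substituting this back into the previous display yields (\ref{oap}) at once.

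The argument is therefore essentially bookkeeping once the Corollary and the Lemma are available, and the one step that demands care is the trace. I must ensure that the contraction is carried out consistently with the divergence convention on both sides, in particular that the divergence relative to $\nabla^{s^*g}$ on the base and the divergence relative to $\overset{s}{\nabla}{}^{\mathrm{ChR}}$ are both read as traces of the respective covariant derivatives, so that the linear trace operation commutes with the two established identities. Because both connections live over $M$ and the Lemma already isolates the extra vertical-derivative term $s^*\big(\frac{\p Z^\mu}{\p y^\beta}\big)D_X s^\beta$, no further manipulation of the Cartan tensor is needed beyond what the Corollary has absorbed into $s^*I(D_{s^*Z}s)$.
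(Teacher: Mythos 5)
Your proposal is correct and follows exactly the paper's route: the paper presents Theorem \ref{yqq} ``as a consequence'' of the Corollary (applied with $Y=s^*Z$) and of the Lemma traced over the differentiation index, which is precisely your two-step argument. The one point you flag---that all divergences are read as traces of the respective covariant derivatives, so the trace commutes with both identities---is indeed the only detail to check, and it holds.
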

Integrating we recover Rund's  divergence theorem \cite[Eq.\ (3.17)]{rund75}
\begin{equation} \label{hyd}
\int_D\! \left(\!s^*(\nabla^{HC}\!\! \cdot\! Z)+s^*I(D_{s^*Z}s)+ s^*\Big(\frac{\p
Z^\mu}{\p y^
\beta}\Big) D_\mu s^\beta \!\right)\! \mu(s)=\!\int_{\p D}\!\!\!\!\!\! s^*g(s^*Z, n^R(s))  \nu^R(s)
\end{equation}
where $\mu(s)=\sqrt{\vert \det s^*g\vert}\,\dd x$ is the canonical volume of the Riemannian metric $s^*g$,  $\nu^R(s)$ is the canonical volume on the hypersurface $\p D$ associated to the induced metric $s^*g\vert_{\p D}$, and $n^R(s)$ is the outward normal to $\p D$ according to $s^*g$ (hence $T_p \p D=\ker g_{s(p)}(n^R(s),\cdot)$). The letter $R$ stands for ``Rund's''.

%Clearly, it could be given in integral form since upon integration the left-hand side can be handled using the standard divergence theorem of pseudo-Riemannian geometry.

The problem with this formulation is, of course, that in this way one would get a boundary term  in which the normal is determined  according to $s^*g$ and the area form according to the metric induced by $s^*g$ on the boundary. It is really a (pseudo-)Riemannian divergence theorem in disguise rather than a Finslerian divergence theorem.

We are now going to use Eq.\ (\ref{oap}) in order to get a genuinely Finslerian divergence theorem
in integral form. By this we mean that the pullback metric and its Levi-Civita
connection will not appear, and also the area form  and normal at the boundary will
be  Finslerian.

Let $\mu$ be a volume form on $M$ and let  $S\subset M$ be a hypersurface with normal
$n$, meaning by this that at every $p\in S$, $T_pS=\ker g_n(n,\cdot)$. Here we are using the fact that whenever the manifold dimension is at least 3 the Legendre map $\ell \colon v \mapsto g_v(v,\cdot)$ is a diffeomorphism of $T_pM\backslash 0$. The author provided a proof  in \cite[Theor.\ 6]{minguzzi13c} and an independent and different one can be found in  Ruzhansky and Sugimoto \cite{ruzhansky15}. In 2 dimensions the signature of $g$ can only be Riemannian or Lorentzian. In the former case it is well known that $\ell$ is a diffeomorphism, in the latter case it is not difficult to show that $\ell$ is surjective but not necessarily injective. So in more than two dimensions the normal exists and is unique up to scaling, while in two dimensions the normal exists but is non-unique, as there can be many with  different directions. Observe that  the section $s$ does not enter the definition of $n$, so this normal is different from Rund's $n^R(s)$. Let $X$ be a
field transverse to $S$, then the volume form induced on $S$ is
\begin{equation} \label{ind}
\nu=\frac{1}{-g_{n}(n,X)} \, i_X \mu,
\end{equation}
and is evaluated only on vectors tangent to $S$. It is easy to show that it is
independent of the choice of transverse field $X$ and depends only on the scale of
$n$. If the signature of $g$ is Lorentzian, namely $(-,+,\cdots,+)$, and $S$ is spacelike the scale can be fixed with $g_n(n,n)=-1$, in which case $\nu_S=i_n
\mu$.

Let $D$ be a domain such that $\p D=S$. By the divergence theorem
\[
\int_D \textrm{div}_\mu X \mu=\int_D d\, i_X\mu=\int_S i_X \mu=-\int_S g_n(n,X) \nu.
\]
Now, if the mean Cartan torsion \[I_\alpha:=g^{\mu \nu} C_{\mu \nu \alpha}=\frac{1}{2} \frac{\p}{\p y^\alpha} \log \vert\det g_{\mu \nu}\vert\] vanishes, as the determinant does not depend on the fiber variable, we can define on $M$ the usual natural volume
form $\mu=\sqrt{\vert \det g\vert}\,\dd x=\sqrt{\vert \det s^*g\vert}\,\dd x$, thus using Theor.\ \ref{yqq} we arrive at the next result.

\begin{theorem} \label{diu}
Let $(M,\mathscr{L})$ be a pseudo-Finsler space for which the mean Cartan torsion
vanishes. Let $Z\colon TM\backslash 0\to TM$ and let $s\colon M \to TM\backslash 0$,
then
\begin{equation}
\int_D\Big\{s^*(\nabla^{HC} \cdot Z)+ s^*\Big(\frac{\p Z^\mu}{\p y^
\beta}\Big) D_\mu s^\beta\Big\} \, \mu= -\int_{\p D} g_n(n,s^* Z) \nu .
\end{equation}
\end{theorem}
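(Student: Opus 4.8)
The plan is to combine the pointwise divergence formula of Theorem \ref{yqq} with the integral identity furnished by the classical divergence theorem applied to the globally well-defined volume form $\mu=\sqrt{\vert\det g\vert}\,\dd x$. The crucial simplification comes from the hypothesis $I_\alpha=0$. First I would observe that, since $I_\alpha=\tfrac{1}{2}\,\p_{y^\alpha}\log\vert\det g_{\mu\nu}\vert$, the vanishing of the mean Cartan torsion means $\det g_{\mu\nu}$ is independent of the fiber coordinate $y$; hence $\sqrt{\vert\det g\vert}$ is a genuine function on $M$ and $\mu$ is a bona fide volume form on $M$, with $s^*g$ producing the \emph{same} density for any section $s$. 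This removes the section-dependence of the Riemannian volume that plagued Rund's formulation.

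Next I would start from Eq.\ (\ref{oap}) of Theorem \ref{yqq} and integrate it against $\mu$ over $D$. The left-hand side of (\ref{oap}) is the Levi-Civita divergence $\nabla^{s^*g}\cdot s^*Z$ of the vector field $s^*Z\colon M\to TM$ with respect to the metric $s^*g$. The key point is that for \emph{any} Riemannian (or pseudo-Riemannian) metric, the Levi-Civita divergence of a vector field $W$ coincides with the metric-independent expression $\mathrm{div}_{\mu}W=\tfrac{1}{\sqrt{\vert\det g\vert}}\,\p_{x^\mu}\!\big(\sqrt{\vert\det g\vert}\,W^\mu\big)$, provided $\mu$ is the associated volume form. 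Therefore
\begin{equation*}
\int_D (\nabla^{s^*g}\cdot s^*Z)\,\mu=\int_D \mathrm{div}_\mu (s^*Z)\,\mu=\int_D d\,i_{s^*Z}\mu,
\end{equation*}
where in the last step I use the standard relation $(\mathrm{div}_\mu W)\,\mu=\mathcal{L}_W\mu=d\,i_W\mu$ together with $i_W\,d\mu=0$ since $d\mu$ is a top form. Here the role of $I_\alpha=0$ is decisive: it is precisely what guarantees that the volume form $\mu$ entering the Levi-Civita divergence is the $x$-only volume $\sqrt{\vert\det g\vert}\,\dd x$ and thus that $\mathrm{div}_\mu$ is computed with a legitimate form on $M$.

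Then I would apply Stokes' theorem to $\int_D d\,i_{s^*Z}\mu=\int_{\p D}i_{s^*Z}\mu$ and rewrite the boundary integrand using the induced volume form. Taking $X=s^*Z$ as the transverse field in the definition (\ref{ind}) of $\nu$ gives $i_{s^*Z}\mu=-g_n(n,s^*Z)\,\nu$ on vectors tangent to $\p D$, so $\int_{\p D}i_{s^*Z}\mu=-\int_{\p D}g_n(n,s^*Z)\,\nu$. Finally, I substitute on the left-hand side the expression for $\nabla^{s^*g}\cdot s^*Z$ given by (\ref{oap}), namely $s^*(\nabla^{HC}\cdot Z)+s^*I(D_{s^*Z}s)+s^*(\p Z^\mu/\p y^\beta)D_\mu s^\beta$, and note that the middle term drops out because $s^*I=0$ by hypothesis. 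This yields exactly the asserted identity. The only genuinely delicate point to verify carefully is that the Finslerian normal $n$ used in $g_n(n,\cdot)$ is well defined (guaranteed in dimension $\geq 3$ by invertibility of the Legendre map, as recalled before the statement) and that the induced form $\nu$ of (\ref{ind}) is independent of the transverse field, so that the choice $X=s^*Z$ is legitimate even where $s^*Z$ fails to be transverse on a measure-zero set; this is handled by the already-established invariance of $\nu$.
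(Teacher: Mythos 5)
Your proposal is correct and follows essentially the same route as the paper: both use $I_\alpha=0$ to make $\mu=\sqrt{\vert\det g\vert}\,\dd x=\sqrt{\vert\det s^*g\vert}\,\dd x$ a legitimate volume form on $M$, identify the left-hand side of Eq.\ (\ref{oap}) with $\mathrm{div}_\mu(s^*Z)$, apply Stokes' theorem, and convert the boundary term via the definition (\ref{ind}) of $\nu$ with the Finslerian normal $n$. Your write-up merely makes explicit two steps the paper leaves implicit (the coincidence of the Levi-Civita divergence with $\mathrm{div}_\mu$, and the transversality issue for $s^*Z$ at the boundary, which is indeed harmless since the identity $i_X\mu\vert_{T\p D}=-g_n(n,X)\,\nu$ extends to all $X$ by linearity).
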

%Observe that this theorem depends once again on the assumption that the indicatrices are affine spheres.
%In the next section we shall apply this theorem to the energy-momentum current
%$Z^\mu(x,y)=-T^{\mu}_{\ \alpha} y^\alpha$ taking advantage of Eq.\ (\ref{iqq}).
The normal $n$ and the  form $\nu$ do not depend on the section $s$.

\begin{remark}
This theorem does not follow from Rund's divergence theorem \cite[Eq.\ (3.17)]{rund75} (see Eq.\ (\ref{hyd})) by setting $I_\alpha=0$ since the normal and induced volume form have been obtained following a rather different geometrical argument. Observe that in  Rund's theorem   the boundary term   would still depend on the section, so it is interesting that it can be replaced by the more transparent form  given by Theorem \ref{diu}.
\end{remark}

\begin{remark}
Apparently the theorem privileges the Chern-Rund or the Cartan connections but it is not really so. In fact, in the divergence $\nabla^{HC} \cdot Z$ the connection coefficients enter only in the combination \cite[Eq.\ (59)]{minguzzi14c} $\Gamma^\mu_{\alpha \mu}=\frac{\delta}{\delta x^\alpha} \ln \sqrt{\vert \det g\vert}$. Recall that $G^\mu_{\alpha \mu}=\Gamma^\mu_{\alpha \mu}+L^\mu_{\alpha \mu}$. We have shown \cite[Eq.\ (52)]{minguzzi14c})  that whenever $I_\alpha=0$ we have $J_\alpha:=L^\mu_{\alpha \mu}=0$, thus  $G^\mu_{\alpha \mu}=\Gamma^\mu_{\alpha \mu}$ and we can replace $\nabla^{HC} \cdot Z$ with $\nabla^{HB} \cdot Z$.
\end{remark}

\subsection{Symmetry implies conservation}

Let us suppose that $Z$ is a vertical gradient, $Z_\gamma=\frac{\p}{\p y^\gamma} f$, $f\colon E\to \mathbb{R}$, and divergenceless in a horizontal Finslerian sense, namely $\nabla^{HC} \cdot Z=0$.
We want to show that any pregeodesic Killing vector field $s$ implies a conserved quantity.

If $s$ is a Finslerian Killing vector \cite{knebelman29,rund59}
\begin{equation} \label{kil}
g_{ \delta  \beta} \nabla^{HC}_\gamma s^\beta+ g_{\gamma \beta}\nabla^{HC}_\delta s^\beta+
2 y^\beta (\nabla^{HC}_\beta s^\mu) C_{\gamma \delta \mu}  =0.
\end{equation}
which evaluated at the support vector $s$, and setting $D^\beta=g^{\beta\alpha}(x,s(x)) D_\alpha$, gives
\begin{align}
 & D^\gamma s^\delta+ D^\delta s^\gamma+ 2 (D_s
 s^\mu) C^{\gamma \delta}_{\mu}  =0,
\end{align}
We can write
\begin{equation}
\begin{aligned}  \label{hud}
s^*\Big(\frac{\p Z^\mu}{\p y^
\delta}\Big) D_\mu s^\delta \! &=s^*\Big(g^{\mu \nu} \frac{\p Z_\nu}{\p y^\delta}-\!2 C^{\mu \nu}_\delta Z_\nu\Big)D_\mu s^\delta=s^*\Big( \frac{\p Z_\gamma}{\p y^\delta}\!-\!2 C^{\nu}_{\gamma\delta} Z_\nu\Big) D^\gamma s^\delta \\
&=-s^*\Big( \frac{\p Z_\gamma}{\p y^\delta}-2 C^{\nu}_{\gamma\delta} Z_\nu\Big)  C^{\gamma \delta}_{\mu} (D_s
 s^\mu)
 \end{aligned}
\end{equation}
If $s$ is  pregeodesic, $D_s s\propto s$, this term vanishes and so, if $(M,\mathscr{L})$ is a globally hyperbolic spacetime, by Theor.\ \ref{diu} the
quantity
\begin{equation}
E=\int_S  g_n(n, s^*Z) \nu ,
\end{equation}
 is conserved, namely independent of the Cauchy hypersurface $S$, where $n$ is the
 normal to $S$ and $\nu$ is the induced  volume.
As a typical application, $Z$ will be the  energy-momentum current, $s$ will represent an observer as a future-directed timelike field, and $E$ will be the total energy content of spacetime according to observer $s$.

If $s$ is normalized, $g_s(s,s)=-1$, the geodesic assumption is superfluous, indeed from Eq.\ (\ref{kil}) we obtain  $2g_s(D_ss,
  \cdot)=-\dd (g_s(s,s))=0$, thus $s$ is geodesic.

The problem of energy-momentum conservation in Finsler gravity theories is notoriously difficult \cite{rund62,ishikawa80,ikeda81,anastasiei87,rutz96}. Here we have  given  sufficient conditions on $Z$ that imply the existence of a conserved energy.
%We have not addressed the problem of defini of $Z$ starting from , see \cite{minguzzi14c} for a related discussion and a proposal.

The divergence of $Z$  can be elaborated as follows, see \cite[Eq.\ (16)]{minguzzi14c}
\begin{align}
\nabla^{HC} \!\cdot\! Z&=g^{\mu \nu} \nabla^{HC}_\mu Z_\nu =g^{\mu \nu} \Big(\frac{\delta}{\delta x^\mu} \frac{\p f}{\p y^\nu} -\Gamma^\alpha_{\nu \mu} \frac{\p f}{\p y^\alpha}\Big)=g^{\mu \nu} \Big(\frac{\p }{\p y^\nu} \frac{\delta f}{\delta x^\mu}  +L^\alpha_{\nu \mu} \frac{\p f}{\p y^\alpha}\Big)\nonumber \\
&=\frac{\p }{\p y^\nu} \Big(g^{ \nu \mu}  \frac{\delta f}{\delta x^\mu}\Big) +2I^\mu \frac{\delta f}{\delta x^\mu}+J^\alpha \frac{\p f}{\p y^\alpha} .
\end{align}
Since we assume $I_\alpha=0$ which implies $J_\alpha=0$ by \cite[Eq.\ (52)]{minguzzi14c}), the divergenceless condition can be equivalently written $\frac{\p }{\p y^\nu} \Big(g^{ \nu \mu}  \frac{\delta f}{\delta x^\mu}\Big)=0$. In conclusion, the conditions on the field $Z$ can be equivalently expressed through one of the following conditions on $f$: (a) the vertical divergence of the horizontal gradient of $f$ vanishes;  (b) the horizontal divergence of the vertical gradient of $f$ vanishes.

%
%\begin{remark}
%One can remove the gradient condition on $Z^\nu$, demanding the Killing field $s^\alpha$ to be hypersurface orthogonal in Eq.\ (\ref{hud}), namely $D^{[\alpha} s^{\beta]}=\eta^{[\alpha} s^{\beta]}$ for some vector field $\eta$. Physically, in this case the theorem states that for divergenceless gravitational energy-momentum current, any static static spacetime has a conserved energy. A proposal for a gravitational energy-momentum current  can be found in \cite{minguzzi14c}. It is divergenceless in empty spacetime  under the Landsberg condition $L^\alpha_{\beta \gamma}=0$.
%\end{remark}

\section{Conclusions}
We have discussed the advantages and drawbacks of the known Finslerian divergence theorems. For pseudo-Finsler spaces having a vanishing mean Cartan torsion we have provided a divergence theorem which might be regarded as genuinely Finslerian as it deals with a vector field that might depend on the fiber coordinates, and  a boundary integral which involves the Finslerian normal and the Finslerian induced volume form.

 In the last section we have  shown how to use this theorem in order to construct conserved quantities in pseudo-Finsler space. The conditions placed on the energy-momentum current $Z$ might help to select the correct dynamical field equations for Finsler gravity theory.

\section*{Acknowledgments}
A preprint version of this work  appeared in {arXiv}:1508.06053.

%\bibliography{../../bibliografie/simultaneity,../../bibliografie/libri,../../bibliografie/miei,../../bibliografie/mieiPrep,../../bibliografie/mieiProc}
%bibliographystyle{jmpTitles}

\end{document}